%% LyX 2.0.5.1 created this file.  For more info, see http://www.lyx.org/.
%% Do not edit unless you really know what you are doing.
\documentclass[oneside,british]{amsart}
\usepackage[T1]{fontenc}
\usepackage[latin9]{inputenc}
\usepackage{textcomp}
\usepackage{amsthm}
\usepackage{amstext}
\usepackage{amssymb}
\usepackage{graphicx}
\usepackage{esint}

\makeatletter
%%%%%%%%%%%%%%%%%%%%%%%%%%%%%% Textclass specific LaTeX commands.
\numberwithin{equation}{section}
\numberwithin{figure}{section}
\theoremstyle{plain}
\newtheorem{thm}{\protect\theoremname}
  \theoremstyle{plain}
  \newtheorem{prop}[thm]{\protect\propositionname}
  \theoremstyle{definition}
  \newtheorem{example}[thm]{\protect\examplename}
  \theoremstyle{definition}
  \newtheorem{defn}[thm]{\protect\definitionname}
  \theoremstyle{plain}
  \newtheorem{lem}[thm]{\protect\lemmaname}
  \theoremstyle{plain}
  \newtheorem{cor}[thm]{\protect\corollaryname}

\makeatother

\usepackage{babel}
  \providecommand{\corollaryname}{Corollary}
  \providecommand{\definitionname}{Definition}
  \providecommand{\examplename}{Example}
  \providecommand{\lemmaname}{Lemma}
  \providecommand{\propositionname}{Proposition}
\providecommand{\theoremname}{Theorem}

\begin{document}

\title{Maxwell\textasciiacute{}s Equations, The Euler index and Morse theory}

\author{Carlos Valero \\
Universidad Autonoma Metropolitana (UAM)\\
Unidad Cuajimalpa\\
Mexico City, Mexico\\
\\
}

\maketitle
\global\long\def\CC{\mathbb{C}}

\global\long\def\RR{\mathbb{R}}

\global\long\def\map{\rightarrow}

\global\long\def\mult{\mathcal{M}}

\global\long\def\EE{\mathcal{E}}

\global\long\def\OO{\mathcal{O}}

\global\long\def\FH{\mathcal{F}}

\global\long\def\CH{\mathcal{H}}

\global\long\def\VV{\mathcal{V}}

\global\long\def\PP{\mathcal{P}}

\global\long\def\CS{\mathcal{C}}

\global\long\def\tangent{T}

\global\long\def\cotangent{\tangent^{*}}

\global\long\def\conormal{\mathcal{C}}

\global\long\def\SS{\mathcal{S}}

\global\long\def\KK{\mathcal{K}}

\global\long\def\NN{\mathcal{N}}

\global\long\def\sym#1{\hbox{S}^{2}#1}

\global\long\def\symz#1{\hbox{S}_{0}^{2}#1}

\global\long\def\proj#1{\hbox{P}#1}

\global\long\def\SO{\hbox{SO}}

\global\long\def\GL{\hbox{GL}}

\global\long\def\U{\hbox{U}}

\global\long\def\tr{\hbox{tr}}

\global\long\def\ZZ{\mathbb{Z}}

\global\long\def\der#1#2{\frac{\partial#1}{\partial#2}}

\global\long\def\covder{\hbox{D}}

\global\long\def\diff{d}

\global\long\def\dero#1{\frac{\partial}{\partial#1}}

\global\long\def\ind{\hbox{\, ind}}

\global\long\def\deg{\hbox{deg}}

\global\long\def\smb{S}

\begin{abstract}
We show show that the singularities of the Fresnel surface for Maxwell\textasciiacute{}s
equation on an anisotrpic material can be accounted from purely topological
considerations. The importance of these singularities is that they
explain the phenomenon of conical refraction predicted by Hamilton.
We show how to desingularise the Fresnel surface, which will allow
us to use Morse theory to find lower bounds for the number of critical
wave velocities inside the material under consideration. Finally,
we propose a program to generalise the results obtained to the general
case of hyperbolic differential operators on differentiable bundles.
\end{abstract}

\section{Introduction}

One of the most interesting problems in geometrical optics in the
nineteenth century was the phenomenon of double refraction, in which
a ray of light entering certain crystals is refracted into two rays.
Later on, Hamilton discovered that if the ray of light enters a biaxial
crystal in certain directions (known as the optical axis) then the
ray of light must be refracted into a cone of rays. This phenomenon,
now known as conical refraction, was confirmed a year later by Lloyd,
who investigated aragonite at Hamilton's suggestion.

In mathematical terms, conical refraction is explained by the existence
of conical singularities in the Fresnel surface associated to the
crystal. The Fresnel surface is constructed as the set of allowed
wave speeds imposed by Maxwell's equations. In this paper we will
show that the singularities of the Fresnel surface can be accounted
for topological considerations only. Furthermore, we apply Morse theory
to establish a connection between the number of these singularities
and the critical speeds of wave propagation within the crystal. Inspired
by this result, we propose a program to study this problem in the
general context of hyperbolic differential equations on manifolds.

Recall that Maxwell's equations in a medium with dielectric tensor
$\epsilon\in\sym{\RR^{3}}$ are given by (see \cite[pg. 678]{kn:born})
\begin{eqnarray*}
\nabla\times H-\dero t(\epsilon E) & = & 0,\\
\nabla\times E+\dero tH & = & 0,\\
\nabla\cdot(\epsilon E) & = & 0,\\
\nabla\cdot H & = & 0,
\end{eqnarray*}
where we have assumed the speed of light to be equal to one. These
equations describe the behaviour of the electric and magnetic fields
$E$ and $H$, in the medium under consideration. From now on we will
always assume that the eigenvalues of $\epsilon$ are all strictly
positive real numbers. If for constant vectors $E_{0},H_{0},\xi\in\RR^{3}$
and $\tau\in\RR$ we try to find planar wave solutions of the form
\begin{eqnarray*}
E(x,t) & = & E_{0}\exp(i(<\text{\ensuremath{\xi}},x>-\tau t)),\\
H(x,t) & = & H_{0}\exp(i(<\xi,x>-\tau t)),
\end{eqnarray*}
we are lead to the equations 
\begin{equation}
Q_{\epsilon}(\xi)\left(\begin{array}{c}
E_{0}\\
H_{0}
\end{array}\right)=0,<\xi,\epsilon E_{0}>=0,<\xi,H_{0}>=0,\label{eq:PolarisationMaxwell}
\end{equation}
where
\begin{equation}
Q_{\epsilon}(\xi,\tau)=\left(\left(\begin{array}{cc}
\epsilon & 0\\
0 & I
\end{array}\right)\tau+\left(\begin{array}{cc}
0 & P(\xi)\\
-P(\xi) & 0
\end{array}\right)\right)\label{eq:SymbolEpsilon}
\end{equation}
and $P(\xi)$ is the anti-symmetric matrix such that $P(\xi)v=\xi\times v$
for any $v\in\RR^{3}$. For the first of the equations \ref{eq:PolarisationMaxwell}
to hold we need that
\begin{equation}
\det(Q_{\epsilon}(\xi,\tau))=0.\label{eq:EikonalEquation}
\end{equation}
We can assume without loss of generality that the principal axes of
$\epsilon$ are aligned with the $x$,$y$ and $z$ axes in $\RR^{3}$,
so that we can write 
\[
\epsilon=\left(\begin{array}{ccc}
\epsilon_{1} & 0 & 0\\
0 & \epsilon_{2} & 0\\
0 & 0 & \epsilon_{3}
\end{array}\right).
\]
A simple calculation shows that if we let $q_{\epsilon}(\xi,\tau)=\det(Q_{\epsilon}(\xi,\tau))/\tau^{2}$
(assuming $\tau\not=0)$ then we have that
\begin{eqnarray*}
q_{\epsilon}(\xi,\tau) & = & (\xi_{1}^{2}+\xi_{2}^{2}+\xi_{3}^{3})(\epsilon_{1}\xi_{1}^{2}+\epsilon_{2}\xi_{2}^{2}+\epsilon_{3}\xi_{3}^{3})+\\
 & + & (\epsilon_{1}\epsilon_{2}(\xi_{1}^{2}+\xi_{2}^{2})+\epsilon_{2}\epsilon_{3}(\xi_{2}^{2}+\xi_{3}^{2})+\epsilon_{1}\epsilon_{3}(\xi_{1}^{2}+\xi_{3}^{2}))\tau^{2}\\
 & + & \epsilon_{1}\epsilon_{2}\epsilon_{3}\tau^{4}
\end{eqnarray*}
For $\xi\not=0$ and $\tau\not=0$ we can visualise condition \ref{eq:EikonalEquation}
if we consider the Fresnel surface 
\begin{equation}
\FH_{\epsilon}=\{(\tau\xi/||\xi||^{2})\in\RR^{3}|q_{\epsilon}(\text{\ensuremath{\xi},\ensuremath{\tau})}=0\}.\label{eq:FresnelSurface}
\end{equation}
Physically, the Fresnel surface is the space of the allowable \emph{phase
velocities}. We can visualise $\FH_{\epsilon}$ by considering the
following cases:
\begin{enumerate}
\item If $\epsilon_{1}=\epsilon_{2}=\epsilon_{3}\not=0$ then
\[
q_{\epsilon}(\xi,\tau)=\epsilon_{1}(\xi_{1}^{2}+\xi_{2}^{2}+\xi_{3}^{2}-\epsilon_{1}\tau^{2}),
\]
and hence $\FH_{\epsilon}$ is a sphere of radius $\epsilon_{1}^{-1/2}$.
\item If exactly two if the $\epsilon_{i}$\textasciiacute{}s are equal
then $\FH_{\epsilon}$ consists of two smooth surfaces that intersect
tangentially (see Figure \ref{fig:Uniaxial}).
\item If all of $\epsilon_{i}$\textasciiacute{}s are different from each
other, then $\FH_{\epsilon}$ consists two singular surfaces (each
having 4 singularities) intersecting at these singular points (see
Figure \ref{fig:Biaxial}).
\end{enumerate}
The existence of singularities in case $3$ accounts for the phenomena
of conical refraction described by Hamilton (see \cite[pg. 681]{kn:born})
in which ray of light splits into a cone upon entering at certain
directions of a biaxial crystal (e.g aragonite). 

\begin{figure}

\includegraphics[scale=0.1]{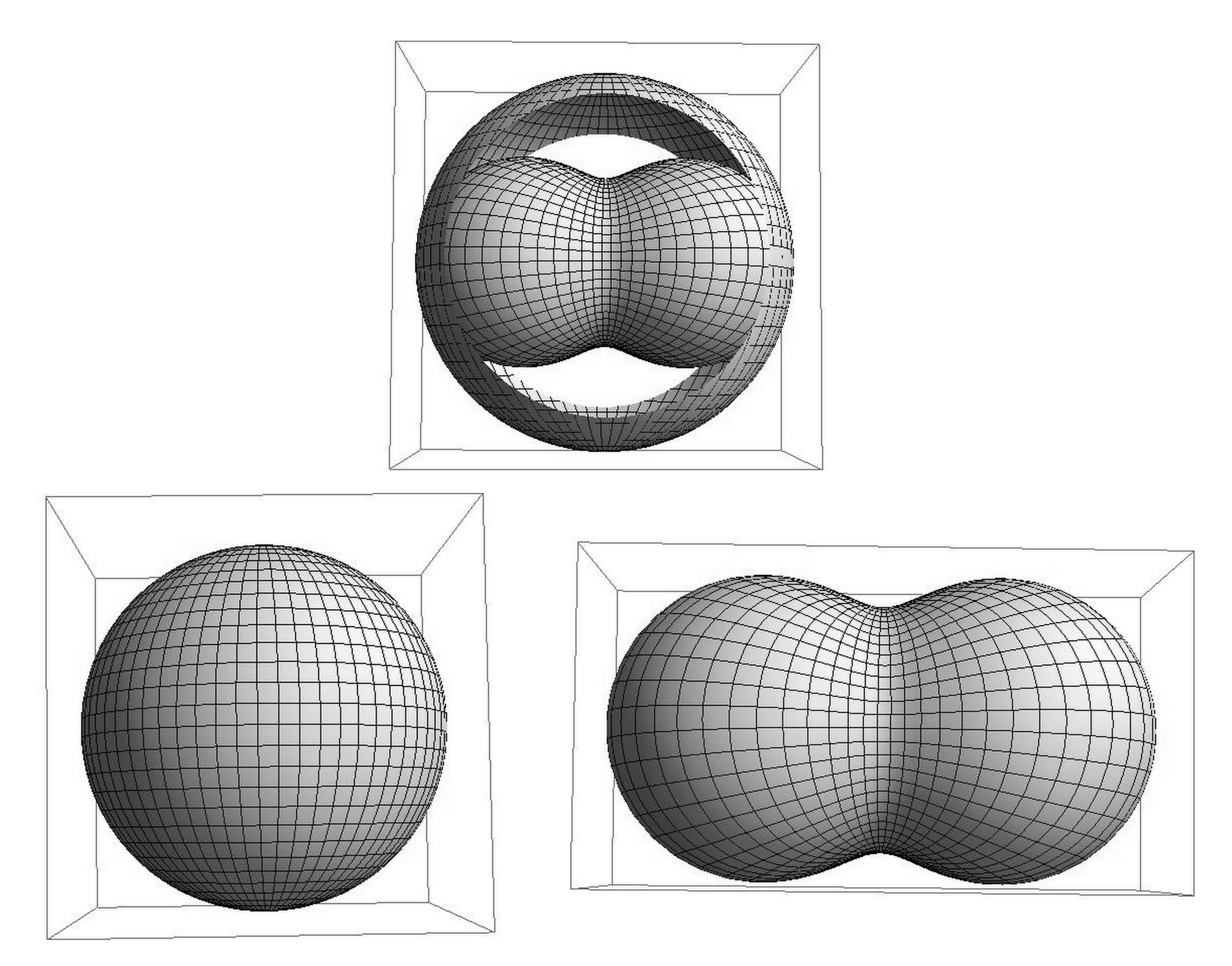}\caption{\label{fig:Uniaxial}Fresnel surface for $\epsilon_{1}=\epsilon_{3}=3$
and $\epsilon_{2}=15$}

\end{figure}

\begin{figure}

\includegraphics[scale=0.1]{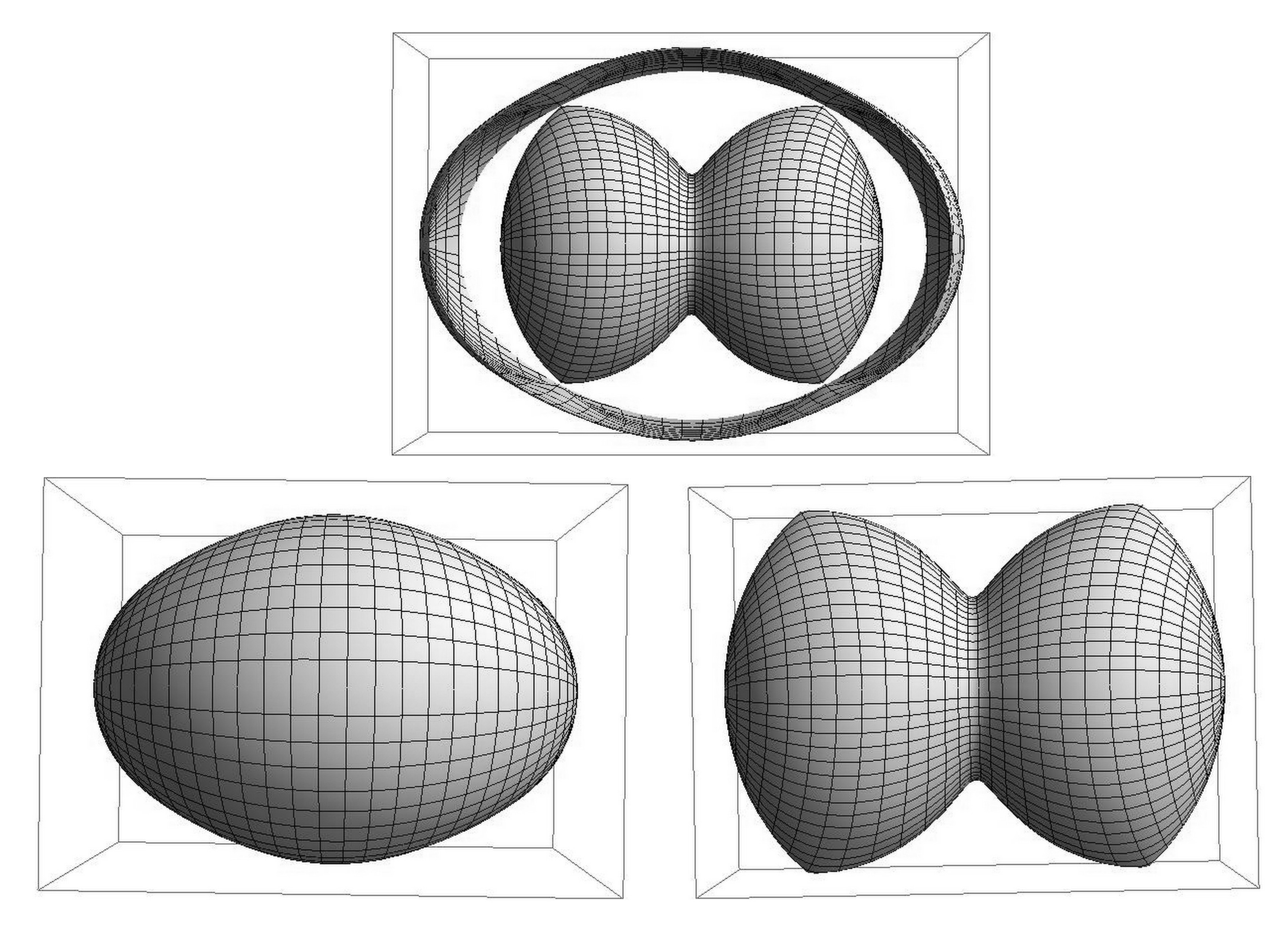}\caption{\label{fig:Biaxial}Fresnel Surface for $\epsilon_{1}=1,\epsilon_{2}=2$
and $\epsilon_{3}=15$}

\end{figure}

\section{Maxwell Equations and Symmetric Tensors on the Sphere}

In this section we will show that, for the case when the eigenvalues
of the dielectric tensor are all different, the singularities of the
associated Fresnel surface can be accounted purely from topological
considerations.

Throughout this section we will assume $F$ to be an orientable Riemannian
real vector bundle of rank $2$ over an orientable closed surface
$X$, and we will denote the metric in $F$ by $<,>_{F}$. We will
let $\sym F$ stand for the bundle of symmetric endomorphism of $F$,
i.e for $p\in X$ we have that $A\in\sym{F_{p}}$ iff $A$ is a linear
map from $F_{p}$ to $F_{p}$ such that
\[
<Av,w>_{F}=<v,Aw>_{F}\hbox{\, for\, all\,\,}v,w\in F_{p},
\]
For a section of $s$ of $\sym F$ we can consider its eigenvalue
functions $\lambda_{s,1},\lambda_{s,2}:X\rightarrow\RR$, i.e for
$p\in X$ we will let $\lambda_{s,1}(p)$ and $\lambda_{s,1}(p)$
be the two eigenvalues of $s(p)$ for all $p\in X$. From now on we
will assume that we have chosen $\lambda_{s,1}$ and $\lambda_{s,2}$
so that $\lambda_{s,1}\leq\lambda_{s,2}$.

Given a symmetric matrix $\epsilon\in\sym{\RR^{3}}$ we define $s_{\epsilon}:S^{2}\rightarrow\sym{(\tangent S^{2})}$
by letting 
\[
<s_{\epsilon}(p)v,w>=<\epsilon v,w>\hbox{\,\, for\, all\,\,\,}p\in S^{k-1}\hbox{\,\, and\,\,}v,w\in\tangent_{p}S^{2}.
\]
Observe that we are identifying $\tangent_{p}S^{2}$ with the subspace
of $\RR^{3}$ consisting of all the vectors orthogonal to $p$, and
that the metric we are using on $\tangent_{p}S^{2}$ is that induced
from the ambient space $\RR^{3}$. To simplify notation we will let
\[
\lambda_{\epsilon,i}=\lambda_{s_{\epsilon},i}\hbox{\,\, for\,\,}i=1,2.
\]
The following result expresses the Fresnel surface $\FH_{\epsilon}$
in terms of $\lambda_{\epsilon,1}$ and $\lambda_{\epsilon,2}$.
\begin{prop}
If $\FH_{\epsilon}$ is the Fresnel surface associated with the dielectric
tensor $\epsilon\in\sym{\RR^{3}}$, then we have that
\[
\FH_{\epsilon}=\bigcup_{i=1}^{2}\{\lambda_{\epsilon,i}^{-1/2}(\xi)\xi|\xi\in S^{2}\}.
\]
\end{prop}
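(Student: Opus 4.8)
The plan is to reduce the statement to a pointwise linear algebra computation on $S^{2}$. Since $q_{\epsilon}$ is homogeneous of degree $4$ in $(\xi,\tau)$, both the condition $q_{\epsilon}(\xi,\tau)=0$ and the point $\tau\xi/\|\xi\|^{2}$ are unchanged under $(\xi,\tau)\mapsto(c\xi,c\tau)$ with $c\neq 0$, so $\FH_{\epsilon}$ is exactly the set of points $\tau\xi$ with $\xi\in S^{2}$, $\tau\neq 0$ and $q_{\epsilon}(\xi,\tau)=0$; the value $\tau=0$ contributes nothing, since $q_{\epsilon}(\xi,0)=\|\xi\|^{2}\langle\epsilon\xi,\xi\rangle>0$ for $\xi\neq 0$. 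Moreover $q_{\epsilon}$ is even in $\xi$ and $s_{\epsilon}(-\xi)=s_{\epsilon}(\xi)$ (both are endomorphisms of the single plane $\xi^{\perp}$), so $\FH_{\epsilon}$ and the set on the right are each invariant under $v\mapsto -v$; hence it suffices to prove that, for a fixed $\xi\in S^{2}$, the positive roots $\tau$ of $q_{\epsilon}(\xi,\tau)$ are exactly $\lambda_{\epsilon,1}^{-1/2}(\xi)$ and $\lambda_{\epsilon,2}^{-1/2}(\xi)$.

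Next I would compute $\det Q_{\epsilon}(\xi,\tau)$ by block elimination. For $\tau\neq 0$ the lower right block $\tau I$ of $Q_{\epsilon}$ is invertible, so the Schur complement formula gives
\[
\det Q_{\epsilon}(\xi,\tau)=\det(\tau I)\,\det(\tau\epsilon+P(\xi)(\tau I)^{-1}P(\xi))=\tau^{3}\det(\tau\epsilon+\tau^{-1}P(\xi)^{2})=\det(\tau^{2}\epsilon+P(\xi)^{2}),
\]
so that $q_{\epsilon}(\xi,\tau)=\tau^{-2}\det(\tau^{2}\epsilon+P(\xi)^{2})$. Since $P(\xi)^{2}v=\xi\times(\xi\times v)=\langle\xi,v\rangle\xi-\|\xi\|^{2}v$, on $S^{2}$ we have $P(\xi)^{2}=\Pi_{\RR\xi}-I=-\Pi_{\tangent_{\xi}S^{2}}$, the negative of the orthogonal projection of $\RR^{3}$ onto $\tangent_{\xi}S^{2}=\xi^{\perp}$. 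Hence $q_{\epsilon}(\xi,\tau)=\tau^{-2}\det(\tau^{2}\epsilon-\Pi_{\tangent_{\xi}S^{2}})$.

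It then remains to factor this $3\times 3$ determinant through the eigendata of $s_{\epsilon}(\xi)$. I would choose an orthonormal basis $e_{1},e_{2}$ of $\tangent_{\xi}S^{2}$ diagonalising $s_{\epsilon}(\xi)$, so that $\langle\epsilon e_{i},e_{j}\rangle=\lambda_{\epsilon,i}(\xi)\delta_{ij}$, and set $e_{3}=\xi$. In this basis $\Pi_{\tangent_{\xi}S^{2}}=\mathrm{diag}(1,1,0)$, while $\epsilon$ is a symmetric matrix whose top left $2\times 2$ block is $\mathrm{diag}(\lambda_{\epsilon,1}(\xi),\lambda_{\epsilon,2}(\xi))$ and whose $(3,3)$ entry $\langle\epsilon\xi,\xi\rangle$ is strictly positive. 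Expanding $\det(\tau^{2}\epsilon-\Pi_{\tangent_{\xi}S^{2}})$ in this basis gives a polynomial of degree two in $\tau^{2}$ with nonzero constant and leading coefficients; exhibiting the factorisation $q_{\epsilon}(\xi,\tau)=c(\xi)\,(\lambda_{\epsilon,1}(\xi)\tau^{2}-1)(\lambda_{\epsilon,2}(\xi)\tau^{2}-1)$ with $c(\xi)$ nowhere zero is then exactly the content of the Proposition: the positive roots are those $\tau$ with $\tau^{2}=\lambda_{\epsilon,i}(\xi)^{-1}$, that is $\tau\xi=\lambda_{\epsilon,i}^{-1/2}(\xi)\xi$, and combining this with the first paragraph completes the proof.

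The main obstacle is precisely this last factorisation. One must verify that the off-diagonal coupling between the line $\RR\xi$ and the plane $\tangent_{\xi}S^{2}$ in the matrix of $\epsilon$ contributes only the overall scalar $c(\xi)$, leaving the quadratic factor $(\lambda_{\epsilon,1}(\xi)\tau^{2}-1)(\lambda_{\epsilon,2}(\xi)\tau^{2}-1)$ intact. A tidier route to the same determinant is the matrix determinant lemma applied to $\tau^{2}\epsilon-I+\Pi_{\RR\xi}$, which writes $\det(\tau^{2}\epsilon-\Pi_{\tangent_{\xi}S^{2}})$ in terms of $\det(\tau^{2}\epsilon-I)$ and $\langle(\tau^{2}\epsilon-I)^{-1}\xi,\xi\rangle$; the bookkeeping of this rank-one correction, and its reconciliation with the eigenvalues that $\epsilon$ induces on the moving tangent plane $\tangent_{\xi}S^{2}$, is the delicate point where a sign or normalisation slip would most naturally occur.
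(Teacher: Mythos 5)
Your reduction is correct and, up to the point you yourself flag, cleaner than the paper's own argument: the Schur-complement identity $\det Q_{\epsilon}(\xi,\tau)=\det(\tau^{2}\epsilon+P(\xi)^{2})$ and the identification $P(\xi)^{2}=-\Pi_{\xi^{\perp}}$ for $\xi\in S^{2}$ are both right, so indeed $q_{\epsilon}(\xi,\tau)=\tau^{-2}\det(\tau^{2}\epsilon-\Pi_{\xi^{\perp}})$. The genuine gap is exactly the step you defer to the end: the factorisation $q_{\epsilon}=c(\xi)(\lambda_{\epsilon,1}\tau^{2}-1)(\lambda_{\epsilon,2}\tau^{2}-1)$ is \emph{false} whenever $\epsilon$ does not map $\xi^{\perp}$ into itself. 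In your adapted basis $e_{1},e_{2},\xi$ put $a=\langle\epsilon e_{1},\xi\rangle$, $b=\langle\epsilon e_{2},\xi\rangle$, $c=\langle\epsilon\xi,\xi\rangle$; expanding the determinant gives
\[
q_{\epsilon}(\xi,\tau)=c\,(\lambda_{\epsilon,1}\tau^{2}-1)(\lambda_{\epsilon,2}\tau^{2}-1)-a^{2}\tau^{2}(\lambda_{\epsilon,2}\tau^{2}-1)-b^{2}\tau^{2}(\lambda_{\epsilon,1}\tau^{2}-1),
\]
so at $\tau^{2}=\lambda_{\epsilon,1}^{-1}$ one finds $q_{\epsilon}=-a^{2}(\lambda_{\epsilon,2}-\lambda_{\epsilon,1})/\lambda_{\epsilon,1}^{2}$, which is nonzero as soon as $a\neq0$ and the two eigenvalues differ. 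Concretely, for $\epsilon=\mathrm{diag}(\epsilon_{1},\epsilon_{2},\epsilon_{3})$ and $\xi=(1,1,0)/\sqrt{2}$ one computes $q_{\epsilon}=(\epsilon_{3}\tau^{2}-1)\bigl(\epsilon_{1}\epsilon_{2}\tau^{2}-\tfrac{1}{2}(\epsilon_{1}+\epsilon_{2})\bigr)$, whose nontrivial root is $\tau^{2}=\tfrac{1}{2}(\epsilon_{1}^{-1}+\epsilon_{2}^{-1})$, whereas $\lambda_{\epsilon,1}^{-1}=2/(\epsilon_{1}+\epsilon_{2})$; these agree only when $\epsilon_{1}=\epsilon_{2}$.

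What your computation actually proves is $\det(\tau^{2}\epsilon-\Pi_{\xi^{\perp}})=\det(\epsilon)\det(\tau^{2}I-\epsilon^{-1}\Pi_{\xi^{\perp}})$, and since $\epsilon^{-1}\Pi_{\xi^{\perp}}=(\epsilon^{-1}\Pi_{\xi^{\perp}})\Pi_{\xi^{\perp}}$ has the same nonzero spectrum as $\Pi_{\xi^{\perp}}\epsilon^{-1}\Pi_{\xi^{\perp}}$, the nonzero roots in $\tau^{2}$ are the eigenvalues of the compression of $\epsilon^{-1}$ to $\xi^{\perp}$ --- not the reciprocals of the eigenvalues of the compression of $\epsilon$. (Physically: it is $D_{0}=\epsilon E_{0}$, not $E_{0}$, that Maxwell's constraint forces into $\xi^{\perp}$.) So the displayed identity holds only after replacing $\lambda_{\epsilon,i}^{-1/2}$ by $\lambda_{\epsilon^{-1},i}^{1/2}$, where $\lambda_{\epsilon^{-1},i}$ are the eigenvalue functions of $s_{\epsilon^{-1}}$. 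You should be aware that the paper's own proof conceals the same difficulty: at the line where $\langle\epsilon E_{0},v\rangle$ is replaced by the compressed operator acting on $\pi_{\xi}E_{0}$, the term $\langle E_{0},\hat{\xi}\rangle\langle\epsilon\hat{\xi},v\rangle$ is silently dropped, which is legitimate only if $E_{0}\perp\xi$, and that is not one of the constraints. The topological conclusions of the later sections survive, because the corrected eigenvalue functions come from the section $s_{\epsilon^{-1}}$ of the same bundle and the same Euler-class argument applies; but neither your proof as written nor the paper's establishes the Proposition in the form stated.
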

\begin{proof}
As we have seen, we must have that
\[
Q_{\epsilon}(\xi,\tau)\left(\begin{array}{c}
E_{0}\\
H_{0}
\end{array}\right)=0,
\]
where $Q_{\epsilon}$ is given by \ref{eq:SymbolEpsilon}. The above
formula is the same as the formulas
\begin{eqnarray*}
\tau\epsilon E_{0}+\xi\times H_{0} & = & 0,\\
\tau H_{0}-\xi\times E_{0} & = & 0.
\end{eqnarray*}
From these last two equations we obtain that
\[
(\tau/||\xi||)^{2}\epsilon E_{0}-(\xi/||\xi||)\times((\xi/||\xi||\times E_{0})=0.
\]
Since $\xi/||\xi||$ is a unitary vector, we have that the second
term in the left hand side of the above equation is just the negative
of the projection of $E_{0}$ onto the orthogonal space to $\xi$,
and hence
\[
(\tau/||\xi||)^{2}<\epsilon E_{0},v>+<E_{0},v>=0\hbox{\,\, for\, all\,\,}v\hbox{\, shuch\, that\,}<v,\xi>=0.
\]
The above equation means that 
\[
(\sigma_{\epsilon}(\xi/||\xi||)+(||\xi||/\tau)^{2}I)(\pi_{\xi}E_{0})=0,
\]
where $\pi_{\xi}:\RR^{3}\rightarrow\tangent_{\xi/||\xi||}S^{2}$ is
projection onto the orthogonal complement of $\xi$, and $I$ is the
identity morphism of the bundle $\sym{(\tangent S^{2})}$. Hence,
we must have that 
\[
\tau=||\xi||\lambda_{\epsilon,i}^{-1/2}(\xi/||\xi||)
\]
where $\lambda_{i}:S^{2}\rightarrow\RR$ are the eigenvalue functions
of $\sigma_{\epsilon}$. From this formula and the definition of $\FH_{\epsilon}$
(see \ref{eq:FresnelSurface}) we obtain the desired result.\end{proof}
\begin{example}
Consider the case were $\epsilon$ is diagonal with eigenvalues $\epsilon_{1}=1,\epsilon_{2}=2$
and $\epsilon_{3}=15$, as shown in Figure \ref{fig:Biaxial}. The
surface at the bottom right corresponds to $\lambda_{\text{\ensuremath{\epsilon},1 }}$
and the one to the left $\lambda_{\epsilon,2}$. Observe that that
these surfaces join at the points $\xi\in S^{2}$ where $\lambda_{\epsilon,1}(\xi)=\lambda_{\epsilon,2}(\xi)$,
and exactly at these points are where the singularities of $\FH_{\epsilon}$
occur. 
\end{example}
The previous example motivates the importance of the following definition.
\begin{defn}
If $s$ is a section of $\sym F$ we will say that $p\in X$ is a
\emph{multiple point} iff $\lambda_{s,1}(p)=\lambda_{s,2}(p)$, and
we will denote the set of such points by $\mult_{s}$. 
\end{defn}
Let $\symz F$ stand for traceless elements in $\sym F$. For a given
section $s$ of $\sym F$ we can construct its traceless part (which
is a section of $\symz F$) by letting
\[
s_{0}=s-\frac{1}{2}\tr(s)I_{F},
\]
where $I_{F}$ is the identity section in $\sym F$. 
\begin{prop}
\label{prop:MasZeros}The set $\mult_{s}$ coincides with the set
of zeros of the section $s_{0}$. Furthermore, if $s_{0}$ is transversal
to the zero section of $\symz F$ then the eigenvalue function $\lambda_{s,1}$
and $\lambda_{s,2}$ have conic singularities on $\mult_{s}$.\end{prop}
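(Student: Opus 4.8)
The plan is to treat the two assertions in turn. The identity $\mult_{s}=\{s_{0}=0\}$ is fibrewise linear algebra: on a $2$-dimensional inner product space a symmetric endomorphism $A$ has a repeated eigenvalue exactly when it is a scalar multiple of the identity, and then necessarily $A=\tfrac12\tr(A)I$, so $A_{0}=A-\tfrac12\tr(A)I=0$; conversely $A_{0}=0$ forces $A=\tfrac12\tr(A)I$, whose two eigenvalues coincide. Applying this at each $p\in X$ to $A=s(p)$ identifies $\mult_{s}$ with the zero set of $s_{0}$, which settles the first claim.

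For the second claim I would fix a point $p_{0}\in\mult_{s}$ and choose, by Gram--Schmidt, a local orthonormal frame $(e_{1},e_{2})$ of $F$ over a neighbourhood $U$ of $p_{0}$. This trivialises $\symz F|_{U}$ as $U\times\RR^{2}$, sending the traceless symmetric endomorphism with matrix $\left(\begin{smallmatrix}a&b\\b&-a\end{smallmatrix}\right)$ to $(a,b)$; orthonormality of the frame is what makes symmetry of the endomorphism the same as symmetry of the matrix, and the eigenvalues of that matrix are $\pm\sqrt{a^{2}+b^{2}}$. Writing $s_{0}$ in this trivialisation as $p\mapsto(a(p),b(p))$ and using $s=s_{0}+\tfrac12\tr(s)I_{F}$, one gets on $U$
\[
\lambda_{s,1}=h-r,\qquad\lambda_{s,2}=h+r,
\]
where $h=\tfrac12\tr(s)$ is smooth and $r=\sqrt{a^{2}+b^{2}}\ge 0$, the inequality accounting for the ordering $\lambda_{s,1}\le\lambda_{s,2}$.

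Now I would invoke transversality. Since $s_{0}(p_{0})=0$, transversality of $s_{0}$ to the zero section at $p_{0}$ means precisely that $d(a,b)_{p_{0}}\colon\tangent_{p_{0}}X\to\RR^{2}$ is surjective, hence an isomorphism since both spaces are $2$-dimensional; by the inverse function theorem $(u,v):=(a,b)$ is then a local coordinate system on $X$ centred at $p_{0}$. In these coordinates $r=\sqrt{u^{2}+v^{2}}$ is the standard cone function and $h$ stays smooth, so $\lambda_{s,1}=h-\sqrt{u^{2}+v^{2}}$ and $\lambda_{s,2}=h+\sqrt{u^{2}+v^{2}}$ near $p_{0}$; this is exactly the assertion that each eigenvalue function has a conic singularity at $p_{0}$, and since $p_{0}\in\mult_{s}$ was arbitrary we are done.

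The step I expect to need the most care is the translation of ``transversal to the zero section'' into the concrete statement that the component map $(a,b)$ is a submersion at $p_{0}$: one must observe that at a zero of a section this notion is independent of the chosen local trivialisation of $\symz F$, and verify that the trivialisation afforded by an orthonormal frame really does identify the nonnegative eigenvalue of $s_{0}(p)$ with $\sqrt{a(p)^{2}+b(p)^{2}}$. Granting that, the rest --- the inverse function theorem together with the elementary fact that a smooth function plus or minus $\sqrt{u^{2}+v^{2}}$ has a conic singularity --- is routine.
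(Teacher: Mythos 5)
Your proof is correct and follows essentially the same route as the paper: both rest on the decomposition $\lambda_{s,i}=\tfrac12\tr(s)\pm\|s_{0}\|$ and on transversality turning the norm of $s_{0}$ into the standard cone function near a zero. Your write-up is somewhat more careful than the paper's (you make explicit that transversality lets the components $(a,b)$ of $s_{0}$ serve as local coordinates via the inverse function theorem, a step the paper leaves implicit), but there is no substantive difference in method.
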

\begin{proof}
It is easy to see that for a section $s$ of $\sym F$ we have that
\begin{eqnarray*}
\lambda_{s,1}(p) & = & \frac{1}{2}\tr(s(p))-||s_{0}(p)||\\
\lambda_{s,2}(p) & = & \frac{1}{2}\tr(s(p))+||s_{0}(p)||.
\end{eqnarray*}
where the norm used for an element $A\in\symz F_{p}$ is given by
$||A||=\frac{1}{2}\tr(A^{2})$. From this it follows that $\lambda_{s,1}(p)=\lambda_{s,2}(p)$
iff $s_{0}(p)=0$. If $s_{0}$ is transversal to the zero section
of $\symz F$ then near a zero of $s_{0}$ the norm function behaves
like $(x,y)\map(x^{2}+y^{2})^{1/2}$, and hence we have conical singularities
at these points.
\end{proof}
We are now ready to prove that four singularities of $\FH_{\epsilon}$
can be accounted from purely topological considerations. We will need
the following result first.
\begin{lem}
\label{lem:S0FtoFF}The bundle $\symz F$ is isomorphic to $F\otimes_{\CC}F$,
as a real vector bundle. \end{lem}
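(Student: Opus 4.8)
The plan is to use the orientation of $F$ together with its fibre metric to make $F$ into a complex line bundle, to identify $\symz F$ with the bundle of conjugate-linear endomorphisms of that line bundle, and then to recognise the latter as its tensor square.

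First, since $F$ is oriented and Riemannian of rank $2$, there is a canonical bundle map $J\colon F\map F$ with $J^{2}=-I_{F}$: fibrewise $Jv$ is the unique vector with $|Jv|=|v|$, $Jv\perp v$ and $(v,Jv)$ positively oriented (rotation by $+\pi/2$). This $J$ turns $F$ into a complex line bundle, which I will write $L$, and $<,>_{F}$ together with $J$ determines a compatible Hermitian metric $h$ on $L$. Note that orientability of $F$ is exactly what makes $J$, and hence everything that follows, globally defined.

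Second, I would carry out the pointwise linear algebra. On a $2$-dimensional oriented inner product space with complex structure $J$, a real endomorphism $A$ is traceless and symmetric if and only if $AJ=-JA$, equivalently if and only if $A$ is $\CC$-antilinear for $J$: in an oriented orthonormal basis the traceless symmetric endomorphisms are precisely the matrices $\left(\begin{array}{cc}a&b\\b&-a\end{array}\right)$, which in the complex picture is the map $z\mapsto(a+ib)\bar z$, and conversely every conjugate-linear map has this form. Applying this in every fibre shows that $\symz F$ is canonically isomorphic to the bundle $\mathrm{Hom}_{\CC}(\bar L,L)$ of conjugate-linear endomorphisms of $L$, where $\bar L$ is the conjugate line bundle.

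Finally, $\mathrm{Hom}_{\CC}(\bar L,L)\cong(\bar L)^{*}\otimes_{\CC}L$, and the Hermitian metric $h$ gives a $\CC$-linear isomorphism $\bar L\cong L^{*}$, hence $(\bar L)^{*}\cong L$; therefore $\symz F\cong L\otimes_{\CC}L=F\otimes_{\CC}F$. Equivalently, and more concretely, over a trivialising unitary frame $e$ of $L$ a conjugate-linear endomorphism $A$ is recorded by the scalar $w$ with $Ae=we$, and under a unitary change of frame $e\mapsto ge$ with $g\colon X\map\U(1)$ a short computation gives $w\mapsto\bar g^{2}w=g^{-2}w$, which is precisely the transition law of $L^{\otimes2}$; these local identifications patch together. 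I expect the only real bookkeeping, and the point to be careful about, is checking this compatibility with change of frame (equivalently, the naturality of the fibrewise identification); the linear algebra and the $\mathrm{Hom}$–tensor manipulations are routine.
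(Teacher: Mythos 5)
Your proof is correct, and its computational core coincides with the paper's: both arguments ultimately rest on the fact that conjugating a traceless symmetric endomorphism by a rotation through $\theta$ acts on the parameters $(a,b)$ as a rotation through $2\theta$, i.e.\ that the $\SO(2)\cong\U(1)$ cocycle of $\symz F$ is the square of that of $F$. The packaging, however, is genuinely different. The paper stays with real $2\times2$ matrices and simply reads off that the transition functions of $\symz F$ agree with those of $F\otimes_{\CC}F$. You first promote the metric and orientation to a complex structure $J$, identify $\symz F$ invariantly with the bundle of conjugate-linear endomorphisms $\mathrm{Hom}_{\CC}(\bar{L},L)\cong\bar{L}^{*}\otimes_{\CC}L\cong L\otimes_{\CC}L$, and only then perform the (now one-line) cocycle check with unitary scalars. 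What your route buys is twofold: the identification is frame-independent until that final verification, and it makes explicit both where orientability enters (to define $J$ globally) and which complex structure on $F$ underlies the expression $F\otimes_{\CC}F$ --- something the paper needs even to state the lemma but leaves implicit. What the paper's version buys is brevity and the avoidance of complex-linear formalism. The one point worth flagging in yours is that choosing $\mathrm{Hom}_{\CC}(\bar{L},L)$ rather than $\mathrm{Hom}_{\CC}(L,\bar{L})$ selects $L^{\otimes2}$ over its conjugate $\bar{L}^{\otimes2}$; since the lemma asserts only an isomorphism of real vector bundles this ambiguity is harmless (and the subsequent Euler-class argument uses only $\left|\int_{X}e(F)\right|$), but it would matter if one wanted the induced orientation or the sign of $e(\symz F)$.
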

\begin{proof}
Since $F$ has a metric, we can consider it as an $\SO(2)$ bundle.
For 
\[
R=\left(\begin{array}{cc}
\cos(\theta) & -\sin(\theta)\\
\sin(\theta) & \cos(\theta)
\end{array}\right)\in\SO(2)
\]
and
\[
A=\left(\begin{array}{cc}
a & b\\
b & -a
\end{array}\right)\in\symz{\RR^{2}}
\]
we have that 
\[
RAR^{T}=\left(\begin{array}{cc}
p & q\\
q & -p
\end{array}\right),
\]
where
\[
\left(\begin{array}{c}
p\\
q
\end{array}\right)=\left(\begin{array}{cc}
\cos(2\theta) & -\sin(2\theta)\\
\sin(2\theta) & \cos(2\theta)
\end{array}\right)\left(\begin{array}{c}
a\\
b
\end{array}\right).
\]
Hence, the transition functions of the bundle $\symz F$ are the same
as those of the bundle $F\otimes_{\CC}F$.\end{proof}
\begin{prop}
If $s$ is a section of $\sym F$ such that $s_{0}$ is transversal
to the zero section of $\symz F$, then the cardinality of $\mult_{s}$
is at least $2\left|\int_{X}e(F)\right|$.\end{prop}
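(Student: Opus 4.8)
The plan is to recognise $\mult_s$ as the zero set of a transversal section of an oriented plane bundle and then count those zeros with the Euler class. By Proposition~\ref{prop:MasZeros}, $\mult_s$ is precisely the zero locus of the section $s_{0}$ of $\symz F$; since $s_{0}$ is transversal to the zero section of this rank-two bundle over the closed oriented surface $X$, the locus $\mult_s$ is finite, and at each of its points $s_{0}$ has a local intersection index $\pm 1$ once an orientation of $\symz F$ is fixed. The signed sum of these indices is the Euler number $\int_X e(\symz F)$, by the standard correspondence between transversal sections and the Euler class of an oriented real plane bundle. Hence the unsigned count obeys
\[
\#\mult_s \;\geq\; \left|\int_X e(\symz F)\right|,
\]
and it remains to evaluate the right-hand side.

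First I would orient $\symz F$ by using the metric and orientation on $F$ to promote $F$ to a complex line bundle; then Lemma~\ref{lem:S0FtoFF} says $\symz F \cong F\otimes_{\CC}F$ as complex line bundles, because the computation there exhibits the transition cocycle of $\symz F$ as the square (via $\theta\mapsto 2\theta$) of that of $F$. Next I would compute the Euler class: for a complex line bundle $L$ the Euler class of its underlying oriented real plane bundle equals the first Chern class $c_{1}(L)$, and $c_{1}$ is additive under tensor product, so
\[
e(\symz F) \;=\; c_{1}(F\otimes_{\CC}F) \;=\; 2\,c_{1}(F) \;=\; 2\,e(F)
\]
in $H^{2}(X;\ZZ)$. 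Integrating over $X$ and feeding this into the inequality above yields $\#\mult_s \geq 2\bigl|\int_X e(F)\bigr|$.

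The only delicate point is the bookkeeping of orientations: one must check that the isomorphism of Lemma~\ref{lem:S0FtoFF} respects the chosen orientations, equivalently that the squaring map $\theta \mapsto 2\theta$ on $\SO(2)$ is orientation-preserving, so that the signed zero count is genuinely $+2\int_X e(F)$ and not its negative; this is immediate. Everything else is the classical degree-theoretic fact that a section of an oriented plane bundle over a closed oriented surface, transversal to the zero section, has at least $\bigl|\text{Euler number}\bigr|$ zeros, applied here to $s_{0}$.
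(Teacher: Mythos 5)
Your proposal is correct and follows essentially the same route as the paper: identify $\mult_{s}$ with the transversal zero locus of $s_{0}$ via Proposition \ref{prop:MasZeros}, bound its cardinality below by the absolute value of the Euler number of $\symz F$, and evaluate that Euler number as $2\int_{X}e(F)$ using Lemma \ref{lem:S0FtoFF} together with the additivity of $c_{1}$ under complex tensor product. Your added remarks on orientation conventions are a sensible elaboration of a point the paper leaves implicit, but do not change the argument.
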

\begin{proof}
From Proposition \ref{prop:MasZeros} we have that the $\mult_{s}=Z_{0}$,
where $Z_{0}$ is the set of zeros of $s_{0}$. We have that
\[
\int_{X}e(\symz F)=\sum_{p\in Z_{0}}i_{s_{0}}(p),
\]
where $i_{s_{0}}(p)$ is the index of $s_{0}$ at $p$. From the transversality
assumption we have that $i(p)=\pm1$, and from Lemma \ref{lem:S0FtoFF}
we have that $e(\symz F)=e(F\otimes_{\CC}F)=2e(F)$. We conclude that
\[
|Z_{0}|\geq2\left|\int e(F)\right|,
\]
where $|Z_{0}|$ is the cardinality of $Z_{0}$. \end{proof}
\begin{cor}
If $X$ has genus $g$ and $s:X\rightarrow\sym{(\tangent X)}$ is
such that $s_{0}$ is transversal to the zero section of $\symz{(\tangent X)},$
then cardinality of $\mult_{s}$ is at least $4-4g$.\end{cor}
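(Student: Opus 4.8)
The plan is to specialise the preceding Proposition to the tangent bundle, taking $F=\tangent X$. First I would verify that $\tangent X$ meets the standing hypotheses of this section: since $X$ is a closed orientable surface, $\tangent X$ is an orientable real vector bundle of rank $2$, and it carries the Riemannian metric coming from the metric on $X$. Hence the Proposition applies with $F=\tangent X$ and yields
\[
|\mult_s| \;\geq\; 2\left|\int_X e(\tangent X)\right|.
\]

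Next I would evaluate $\int_X e(\tangent X)$. By the Gauss--Bonnet theorem (equivalently, by the Poincar\'e--Hopf theorem applied to a generic vector field on $X$), the Euler number of the tangent bundle equals the Euler characteristic, $\int_X e(\tangent X)=\chi(X)$, and for a closed orientable surface of genus $g$ we have $\chi(X)=2-2g$. Substituting this into the displayed inequality gives
\[
|\mult_s| \;\geq\; 2\,|2-2g| \;=\; |4-4g| \;\geq\; 4-4g,
\]
which is the asserted bound.

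There is essentially no obstacle here: the corollary is a one-line computation once the Euler class of $F$ appearing in the Proposition is identified, via Gauss--Bonnet, with the Euler characteristic of $X$. The only subtlety worth flagging is that the inequality as stated is tight only for $g\leq 1$; for $g\geq 2$ the Proposition in fact delivers the stronger estimate $|\mult_s|\geq 4g-4$, so one may prefer to phrase the conclusion as $|\mult_s|\geq|4-4g|$. It is also worth remarking that the case $g=0$ (the round sphere) reproduces the lower bound of $4$ on the number of multiple points, matching the four conical singularities of the Fresnel surface of a biaxial crystal described in the Introduction.
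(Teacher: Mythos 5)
Your proof is correct and takes essentially the same approach as the paper: apply the preceding proposition with $F=\tangent X$ and substitute $\int_{X}e(\tangent X)=2-2g$. Your side remark that the proposition actually yields the sharper bound $|\mult_{s}|\geq|4-4g|$ (so $4g-4$ when $g\geq2$) is a fair observation, but the core argument is identical to the paper's one-line proof.
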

\begin{proof}
Apply the above proposition and the formula
\[
\int_{X}e(\tangent X)=2-2g.
\]

\end{proof}
We know that the singularities of $\FH_{\epsilon}$ occur at the multiplicity
set $\mult_{\epsilon}$ of the section $s_{\epsilon}$. And the from
the above Corollary we know that the cardinality of $\mult_{\epsilon}$
must be greater or equal than $4$. This is accordance with the known
fact that $\FH_{\epsilon}$ has exactly four singularities, but our
argument has been purely topological. In this discussion we have implicitly
assumed that $(s_{\epsilon})_{0}$ is transversal to the zero section
of $\sym{(\tangent S^{2}})$. We can explicitly find the points in
$\mult_{\epsilon}$ and verify that this is actually the case. 
\begin{prop}
\label{prop:s_epsilon_0_transversal}If the the eigenvalues of $\epsilon\in\sym{\RR^{3}}$
are all different from each other, then the traceless part of $s_{\epsilon}$
is transversal to the zero section of $\symz{(\tangent S^{2})}$.\end{prop}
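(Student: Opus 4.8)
The plan is to reduce the statement to a finite, explicit computation at the zeros of the traceless part $(s_\epsilon)_0$. By conjugating $\epsilon$ by a rotation (which induces a bundle automorphism of $\tangent S^2$ under which transversality is preserved) we may assume $\epsilon$ is diagonal with $\epsilon_1<\epsilon_2<\epsilon_3$. For $p\in S^2$ one has $s_\epsilon(p)v=\pi_p(\epsilon v)$, the orthogonal projection of $\epsilon v$ onto $p^{\perp}=\tangent_pS^2$, so $(s_\epsilon)_0(p)=0$ exactly when the restriction of $\pi_p\epsilon$ to $p^{\perp}$ is a scalar $cI$. A short linear–algebra argument shows this forces $c$ to be an eigenvalue of $\epsilon$ whose eigenvector lies in $p^{\perp}$; running through the three cases $c=\epsilon_1,\epsilon_2,\epsilon_3$ (the two outer ones being excluded by a sign obstruction coming from $\epsilon_1<\epsilon_2<\epsilon_3$) one finds that $\mult_{s_\epsilon}$ consists of exactly the four ``optic axes''
\[
p_0=(p_1,0,p_3),\qquad p_2=0,\qquad \frac{p_3^2}{p_1^2}=\frac{\epsilon_3-\epsilon_2}{\epsilon_2-\epsilon_1},
\]
so that $p_1^2=(\epsilon_2-\epsilon_1)/(\epsilon_3-\epsilon_1)$, $p_3^2=(\epsilon_3-\epsilon_2)/(\epsilon_3-\epsilon_1)$, and at each of them $s_\epsilon(p_0)=\epsilon_2 I$.

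Next I would linearise $(s_\epsilon)_0$ at such a point. Fix the orthonormal tangent frame $f_1(p_0)=e_2$, $f_2(p_0)=u_0:=(-p_3,0,p_1)$ and extend it near $p_0$ by normalising the projections $\pi_p e_2,\pi_p u_0$; this frame has the convenient property that $\partial_\xi f_i|_{p_0}$ is a multiple of $p_0$ for every $\xi\in\tangent_{p_0}S^2$, and it is orthonormal to first order at $p_0$. In this frame $s_\epsilon$ is represented, to first order at $p_0$, by the symmetric matrix $B(p)=\bigl(f_i(p)^{T}\epsilon f_j(p)\bigr)$, and by Proposition \ref{prop:MasZeros} (via the identification of Lemma \ref{lem:S0FtoFF}) its traceless part is governed by the pair $(a,b)=\bigl(\tfrac12(B_{11}-B_{22}),\,B_{12}\bigr)$, with $a(p_0)=b(p_0)=0$. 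Transversality of $(s_\epsilon)_0$ at $p_0$ is then equivalent to invertibility of the $2\times2$ Jacobian of $(a,b)$ with respect to any basis of $\tangent_{p_0}S^2$.

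Differentiating $B_{ij}(p)=f_i(p)^{T}\epsilon f_j(p)$ along $\xi\in\tangent_{p_0}S^2$ and using $\partial_\xi f_i|_{p_0}\parallel p_0$ together with $\epsilon p_0=(\epsilon_1p_1,0,\epsilon_3p_3)$, $\langle\epsilon p_0,e_2\rangle=0$ and $\langle\epsilon p_0,u_0\rangle=(\epsilon_3-\epsilon_1)p_1p_3$, I expect to obtain, in the basis $\{e_2,u_0\}$,
\[
\begin{pmatrix}\partial_{e_2}a & \partial_{u_0}a\\[2pt]\partial_{e_2}b & \partial_{u_0}b\end{pmatrix}
=\begin{pmatrix}0 & (\epsilon_3-\epsilon_1)p_1p_3\\[2pt]-(\epsilon_3-\epsilon_1)p_1p_3 & 0\end{pmatrix},
\]
whose determinant is $\bigl((\epsilon_3-\epsilon_1)p_1p_3\bigr)^2$. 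Since the eigenvalues are distinct, $\epsilon_3\ne\epsilon_1$, and $p_1,p_3\ne0$ because $p_1^2,p_3^2>0$ as computed above; hence the Jacobian is invertible at each of the four points and $(s_\epsilon)_0$ is transversal to the zero section. As a consistency check, all four local indices then come out equal, with total $4=\int_{S^2}e(\symz{(\tangent S^2)})$, in agreement with the Corollary above.

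The step I expect to be the main obstacle is the second one: correctly setting up the linearisation of a section of the ``moving'' bundle $\symz{(\tangent S^2)}$ — that is, making precise that the failure of the convenient frame to be exactly orthonormal is of second order at $p_0$ and therefore does not affect the first-order data $(da,db)|_{p_0}$, and checking that invertibility of the Jacobian is independent of the chosen orthonormal frame (a change of frame rotates $(a,b)$ by an angle $2\theta$, cf.\ Lemma \ref{lem:S0FtoFF}). Once that bookkeeping is in place, the conclusion follows from the short explicit computation above.
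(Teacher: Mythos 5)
Your proposal is correct, and its overall strategy is the same as the paper's: locate the four zeros of $(s_{\epsilon})_{0}$ explicitly and verify that the linearisation there is invertible. The execution, however, is genuinely different. The paper works throughout in spherical coordinates with the frame $(u,v)$, writes $(s_{\epsilon})_{0}$ as an explicit trigonometric map to $\RR^{2}$, solves for its zeros at $\theta\in\{0,\pi\}$, $\tan^{2}\varphi_{m}=(\epsilon_{3}-\epsilon_{2})/(\epsilon_{2}-\epsilon_{1})$ (the same points as your $(\pm p_{1},0,\pm p_{3})$), and then evaluates the Jacobian determinant by brute force. You instead characterise the zeros by the linear--algebra description of the optic axes, and linearise in a frame adapted to $p_{0}$ whose derivative at $p_{0}$ is normal to the sphere; this collapses the Jacobian to the antisymmetric matrix with off-diagonal entries $\pm(\epsilon_{3}-\epsilon_{1})p_{1}p_{3}$ and determinant $(\epsilon_{2}-\epsilon_{1})(\epsilon_{3}-\epsilon_{2})>0$, which agrees with the paper's value $(\epsilon_{2}-\epsilon_{1})^{3/2}(\epsilon_{3}-\epsilon_{2})(\epsilon_{3}-\epsilon_{1})^{-1/2}$ up to the normalisation of the coordinate basis ($|\partial_{\theta}|=\cos\varphi_{m}=p_{1}$). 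What your route buys is a computation that treats all four zeros uniformly, makes the positivity of the determinant structurally evident rather than an arithmetic accident, and yields the local index $+1$ at each zero, matching $\int_{S^{2}}e(\symz{(\tangent S^{2})})=4$ as a consistency check the paper does not perform. The two bookkeeping points you flag are exactly the right ones, and both are harmless for the reasons you give: since the section vanishes at $p_{0}$, the second-order failure of orthonormality of your frame does not affect $(da,db)|_{p_{0}}$, and a change of orthonormal frame post-composes the linearisation with an invertible rotation, so invertibility is frame-independent.
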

\begin{proof}
We can assume that $\epsilon$ is diagonal with diagonal elements
$0<\epsilon_{1}<\epsilon_{2}<\epsilon_{3}$. The zero set of $(s_{\epsilon})_{0}$
does not contain neither $(0,0,1)$ nor $(0,0,-1)$. We can use polar
coordinates

\[
p(\theta,\varphi)=(\cos(\theta)\cos(\varphi),\sin(\theta)\cos(\varphi),\sin(\varphi))\hbox{\,\, for\,\,}0\leq\theta<2\pi,-\pi/2<\varphi<\pi/2,
\]
and let
\begin{eqnarray*}
u(\theta,\varphi) & = & (-\sin(\theta),\cos(\theta),0)\\
v(\theta,\varphi) & = & (-\cos(\text{\ensuremath{\theta})\ensuremath{\sin}(\ensuremath{\varphi}),-\ensuremath{\sin}(\ensuremath{\theta})\ensuremath{\sin}(\ensuremath{\theta}),\ensuremath{\cos}(\ensuremath{\varphi})). }
\end{eqnarray*}
The triple $(p,u,v)$ forms and orthonormal frame, so that locally
we can write
\[
s_{\epsilon}=\left(\begin{array}{cc}
<\epsilon u,u> & <\epsilon u,v>\\
<\epsilon v,u> & <\epsilon v,v>
\end{array}\right)
\]
If we identify $\symz{\RR^{2}}$ with $\RR^{2}$ by letting
\[
(a,b)\sim\left(\begin{array}{cc}
a & b\\
b & -a
\end{array}\right)
\]
then we can write
\[
(s_{\epsilon})_{0}=\frac{1}{2}\left(<\epsilon u,u>-<\epsilon v,v>,2<\epsilon u,v>\right)
\]
where
\begin{eqnarray*}
2<\epsilon u,v> & = & 2(\epsilon_{1}-\epsilon_{2})\cos(\theta)\sin(\theta)\sin(\varphi),\\
<\epsilon u,u>-<\epsilon v,v> & = & (\epsilon_{2}-\epsilon_{1}\sin^{2}(\varphi))\cos^{2}(\theta)+\\
 & + & (\epsilon_{1}-\epsilon_{2}\sin^{2}(\varphi))\sin^{2}(\theta)-\epsilon_{3}\cos^{2}(\varphi).
\end{eqnarray*}
From this formula it is easy to see that the zeros of $(s_{\epsilon})_{0}$
occur at the points $(0,\varphi_{m}),(0,\varphi_{m}+\pi/2),(\pi,\varphi_{m})$
and $(0,\varphi_{m}+\pi/2)$, where 
\[
\varphi_{m}=\arctan\left(\frac{\epsilon_{3}-\epsilon_{2}}{\epsilon_{2}-\epsilon_{1}}\right)^{\frac{1}{2}}.
\]
A simple calculations shows that
\[
D(s_{\epsilon})_{0}(0,\varphi_{m})=(\epsilon_{2}-\epsilon_{1})^{3/2}(\epsilon_{3}-\epsilon_{2})(\epsilon_{3}-\epsilon_{1})^{-1/2}>0,
\]
and hence $(s_{\epsilon})_{0}$ is transversal to the zero section
of $\sym{(\tangent S^{2})}$ at that point. Due to the symmetry of
the problem, the same condition holds at the other points.
\end{proof}

\section{Desingularising the Fresnel Surface and Morse Theory}

We are interested in finding the critical points of the functions
$\lambda_{\epsilon,1}$ and $\lambda_{\epsilon,2}$ described in the
previous section. One reason why this problem is interesting is that
the maximum and minimum are critical points of these functions, and
they correspond to directions of maximal and minimal wave velocities.
We want use a topological argument to find lower bounds to the number
and types of critical points. To do this, we can apply Morse theory.
Recall that if $f:X\rightarrow\RR$ is a smooth function and $p\in X$
is a critical point of $f$ (i.e $df(p)=0$) then we can define its
hessian $d^{2}f(p)$, which is symmetric bilinear form in $\tangent_{p}X$.
A critical point $p\in X$ of $f$ is said to be non-degenerate iff
$d^{2}f(p)$ is non-degenerate, and in this case we define the index
of $p$ as the dimension of the maximal subspace of $\tangent_{p}X$
at which $d^{2}f(p)$ is negative definite. Morse inequalities assert
that if $f$ has only non-degenerate critical points the we have that
(see \cite[pg. 29]{kn:morse})
\begin{eqnarray*}
C_{i}(f) & \geq & \dim(H^{i}(X,\RR))\\
\sum_{i=0}^{\dim(X)}(-1)^{i}\dim(H^{i}(X,\RR)) & = & \sum_{i=0}^{\dim(X)}(-1)^{i}C_{i}(f),
\end{eqnarray*}
where $C_{i}(f)$ are the number of critical points of $f$ of index
$i$, and $H^{i}(X,\RR)$ is the cohomology group of $X$ with real
coefficients. This formulas allow us to estimate the number of critical
points of a given index in terms of topological invariants of $X$. 

We have seen in the previous section both $\lambda_{\epsilon,1}$
and $\lambda_{\epsilon,2}$ are non-smooth at $\mult_{\epsilon}$,
so we cannot apply Morse Theory directly to them. To solve this problem
we desingularise these functions as follows. For a section $s:X\rightarrow\sym F$
we define 
\[
\EE_{s}=\bigcup_{p\in X}\{l_{x}\in\proj F|l_{p}\hbox{\, is\, spanned\, by\, an\, eigenvector\, of\,\,}s(p)\},
\]
where $\proj F$ is the proyectivisation of $F$. We will refer to
$\EE_{s}$ as the \emph{space of eigenlines} of $s$. Over this space
we can define the eigenvalue function of $s$ by 
\[
\lambda_{s}(l_{p})=\hbox{eigenvalue\, of\,\,}s(p)\hbox{\, corresponding\, to\,\,}l_{p}.
\]
It turns out that if $s_{0}$ is transversal to the zero section of
$\symz F$ then we have that $\EE_{s}$ is a smooth submanifold of
$\proj F$, and $\lambda_{s}$ is a smooth function (see \cite[Theorem 5]{kn:mocs}).
Furthermore, the space $\EE_{s}$ is topologically equivalent to the
space obtained by the following procedure (see \cite[Proposition 6]{kn:mocs})
\begin{enumerate}
\item Let $X_{0}$ be obtained by removing small open disks, centred at
the points of $\mult_{s}$, from $X.$
\item The space $\EE_{s}$ is homeomorphic the the space obtained by joining
two disjoint copies of $X_{0}$ along cylinders of the form $S^{1}\times(-\delta,\delta)$
(see Figure \ref{fig:Space of Eigenlines}). 
\end{enumerate}
\begin{figure}

\includegraphics[scale=0.8]{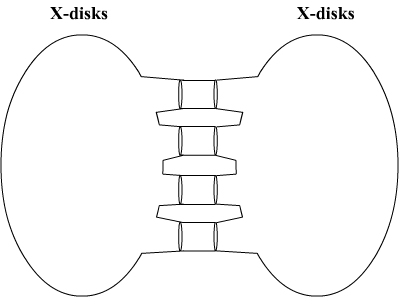}\caption{\label{fig:Space of Eigenlines}The space $\EE_{s}$}

\end{figure}

If we restrict the projection map $\pi:\proj F\rightarrow X$ to $\EE_{s}$
we obtain a projection map $\pi_{s}:\EE_{s}\rightarrow X$ with the
following properties (see \cite[Theorem 5]{kn:mocs})
\begin{enumerate}
\item We have that $\pi_{s}^{-1}(X-\mult_{s})$ consists of two connected
component $X_{1}$ and $X_{2}$ such that $\pi_{s}:X_{i}\rightarrow X-\mult_{s}$
is a diffeomorphism and $\lambda_{s,i}\circ\pi_{s}=\lambda_{s}$.
\item For any $p\in\mult_{s}$ we have that $\pi_{s}^{-1}(p)$ is diffeomorphic
to a circle and $\lambda_{s}$ can have $0,1$ or $2$ critical points
at $\pi^{-1}(p)$. Furthermore, if any of these critical points is
non-degenerate it must have index equal to one.
\end{enumerate}
To simplify notation we define
\[
\EE_{\epsilon}=\EE_{s_{\epsilon}},\pi_{\epsilon}=\pi_{s_{\epsilon}}\hbox{\, and\,}\lambda_{\epsilon}=\lambda_{s_{\epsilon}}.
\]
As an application of the above results, we have the following.
\begin{prop}
If $\epsilon\in\sym{\RR^{3}}$ has eigenvalues $0<\epsilon_{1}<\epsilon_{2}<\epsilon_{3}$
then then space $\EE_{\epsilon}$ is a smooth surface of genus $3$
and $\lambda_{\epsilon}:\EE_{\epsilon}\rightarrow\RR$ is a smooth
function having no critical points at $\pi_{\epsilon}^{-1}(\mult_{\epsilon}).$\end{prop}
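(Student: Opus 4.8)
The statement has two halves, and I would dispatch them separately.

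\emph{Genus.} Here the plan is simply to compute the Euler characteristic from the structural picture already available. By Proposition~\ref{prop:s_epsilon_0_transversal}, $(s_\epsilon)_{0}$ is transversal to the zero section, so $\EE_\epsilon$ is a smooth closed surface and, by \cite[Proposition 6]{kn:mocs}, it is built from two disjoint copies of $X_{0}=S^{2}\setminus(D_{1}\cup\cdots\cup D_{4})$, where $D_{1},\dots,D_{4}$ are small disks about the four points of $\mult_\epsilon$ (there are exactly four, by the discussion following Lemma~\ref{lem:S0FtoFF} applied to $\tangent S^{2}$), glued along four cylinders $S^{1}\times(-\delta,\delta)$. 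Removing an open disk lowers the Euler characteristic by one, so $\chi(X_{0})=2-4=-2$; a cylinder, and the circles along which the pieces are glued, have Euler characteristic $0$, so additivity gives $\chi(\EE_\epsilon)=-2+(-2)=-4$. Since $\EE_\epsilon$ is connected and orientable (both immediate from the gluing description), $2-2g=-4$, hence $g=3$.

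\emph{Absence of critical points over $\mult_\epsilon$.} I would argue locally near a fixed $p\in\mult_\epsilon$; by the symmetry of the configuration (the four multiple points are permuted by the coordinate reflections fixing $\epsilon$) it is enough to take $p=p(0,\varphi_{m})$ in the notation of Proposition~\ref{prop:s_epsilon_0_transversal}, and I keep the orthonormal frame $(u,v)$ and the identification of $\symz{\RR^{2}}$ with $\RR^{2}$ from Lemma~\ref{lem:S0FtoFF} used there. Near $p$ the section $(s_\epsilon)_{0}$ becomes the explicit map $(f_{1},f_{2})$ of that proof, with $(f_{1},f_{2})(p)=0$ and $A:=\covder(f_{1},f_{2})(p)$ invertible. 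A point $(\xi,\ell_{\alpha})$ of $\proj{(\tangent S^{2})}$ near $\pi_\epsilon^{-1}(p)$, where $\ell_{\alpha}=\RR(u\cos\alpha+v\sin\alpha)$, lies on $\EE_\epsilon$ exactly when $\ell_{\alpha}$ is an eigenline of $\left(\begin{array}{cc} f_{1} & f_{2}\\ f_{2} & -f_{1}\end{array}\right)(\xi)$, and an elementary matrix computation rewrites this as $(f_{1},f_{2})(\xi)=t\,(\cos2\alpha,\sin2\alpha)$, with $t$ the eigenvalue of $(s_\epsilon)_{0}(\xi)$ attached to $\ell_{\alpha}$. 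As $(f_{1},f_{2})$ is a local diffeomorphism at $p$, this exhibits $(\alpha,t)\in(\RR/\pi\ZZ)\times(-\delta,\delta)$ as a smooth chart on $\EE_\epsilon$ with $\pi_\epsilon^{-1}(p)=\{t=0\}$, base point $\xi=\xi(\alpha,t)=(f_{1},f_{2})^{-1}(t\cos2\alpha,t\sin2\alpha)$, and, by Proposition~\ref{prop:MasZeros}, $\lambda_\epsilon(\alpha,t)=\frac{1}{2}\tr(s_\epsilon(\xi(\alpha,t)))+t$.

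Differentiating along the circle $\{t=0\}$ and using $\partial_{\alpha}\xi|_{t=0}=0$ gives $\partial_{\alpha}\lambda_\epsilon|_{t=0}=0$ and $\partial_{t}\lambda_\epsilon|_{(\alpha,0)}=1+\diff(\tfrac{1}{2}\tr s_\epsilon)(p)[A^{-1}(\cos2\alpha,\sin2\alpha)]$. So $\lambda_\epsilon$ has no critical point on $\pi_\epsilon^{-1}(p)$ precisely when the covector $\ell:=\diff(\tfrac{1}{2}\tr s_\epsilon)(p)\circ A^{-1}$ on $\RR^{2}\cong\symz{(\tangent_{p}S^{2})}$ has Euclidean norm strictly below $1$ — the vectors $(\cos2\alpha,\sin2\alpha)$ sweeping once over the unit circle. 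The proof thus reduces to evaluating $\ell$ and checking $\|\ell\|<1$. Both ingredients are explicit: on $S^{2}$ one has $\tfrac{1}{2}\tr s_\epsilon(\xi)=\tfrac{1}{2}(\tr\epsilon-\langle\epsilon\xi,\xi\rangle)$, so $\diff(\tfrac{1}{2}\tr s_\epsilon)(p)$ is $-\tfrac{1}{2}$ times the differential of $\xi\mapsto\langle\epsilon\xi,\xi\rangle$ along $S^{2}$ at $p$, read off in the frame $(u,v)$, while $A=\covder(s_\epsilon)_{0}(p)$ is the matrix already computed in Proposition~\ref{prop:s_epsilon_0_transversal}; inserting $\cos^{2}\varphi_{m}=(\epsilon_{2}-\epsilon_{1})/(\epsilon_{3}-\epsilon_{1})$ and $\sin^{2}\varphi_{m}=(\epsilon_{3}-\epsilon_{2})/(\epsilon_{3}-\epsilon_{1})$ and simplifying should produce $\|\ell\|<1$, and the remaining three points follow by the isometries of $S^{2}$ fixing $\epsilon$. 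I expect this last estimate to be the one real obstacle: one must handle the chart $(\alpha,t)$ carefully — in particular the factor $2$ in $(\cos2\alpha,\sin2\alpha)$, which is exactly where the conical degeneracy of $\lambda_{\epsilon,1},\lambda_{\epsilon,2}$ at $p$ resides — and then secure the strict inequality rather than merely $\|\ell\|\le1$. The genus computation and the reduction to this inequality are routine given the preceding section and the facts imported from \cite{kn:mocs}.
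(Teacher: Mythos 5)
Your genus computation is fine and agrees with the paper, which states the same gluing picture and reads off the genus. The problem is in the second half. Your reduction of the statement to the inequality $||\ell||<1$ is correct --- and in fact it is a cleaner and more honest version of what the paper does --- but the estimate you defer to the end fails: carrying out exactly the computation you describe gives $||\ell||=1$, not $||\ell||<1$. Concretely, at $p=p(0,\varphi_{m})$ one finds in the orthonormal frame $(u,v)$ that $\diff(\tfrac{1}{2}\tr s_{\epsilon})(p)=(0,-c)$ and that $A$ is $c$ times rotation by $\pi/2$, where $c=\sqrt{(\epsilon_{2}-\epsilon_{1})(\epsilon_{3}-\epsilon_{2})}$; hence $\ell=(-1,0)$ and $\partial_{t}\lambda_{\epsilon}|_{t=0}=1-\cos2\alpha$, which vanishes at $\alpha=0$. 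Your own chart therefore exhibits one critical point of $\lambda_{\epsilon}$ on each fibre circle, located at the eigenline $\RR e_{2}$, with $e_{2}$ the $\epsilon_{2}$-eigenvector of $\epsilon$.

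This is not a repairable arithmetic slip: the proposition itself appears to be false, as the following coordinate-free argument shows. Along the great circle $\{\xi_{2}=0\}$ the vector $e_{2}=(0,1,0)$ is everywhere tangent to $S^{2}$ and is an eigenvector of $s_{\epsilon}(\xi)$ with constant eigenvalue $\epsilon_{2}$, so this circle lifts to a closed curve $C\subset\EE_{\epsilon}$ on which $\lambda_{\epsilon}\equiv\epsilon_{2}$. All four points of $\mult_{\epsilon}$ lie on $\{\xi_{2}=0\}$, so $C$ meets each fibre circle $\pi_{\epsilon}^{-1}(p)$, $p\in\mult_{\epsilon}$, transversally in one point; and $\lambda_{\epsilon}$ is also constant, equal to $\epsilon_{2}$, on each fibre circle. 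Since the two curves span the tangent plane of the surface $\EE_{\epsilon}$ at their intersection point, $d\lambda_{\epsilon}$ vanishes there. The paper's proof misses this because it tests the condition $\nabla f_{\epsilon}=\nabla g_{\epsilon}$ instead of the correct Lagrange condition that $\nabla g_{\epsilon}$ be a scalar multiple of $\nabla f_{\epsilon}$: at $\alpha=0$ one has $\nabla g_{\epsilon}(0,\varphi_{m},0)=0$, which is proportional to, but not equal to, $\nabla f_{\epsilon}\neq0$. So your approach, completed carefully, proves the opposite of the stated claim; the subsequent corollary relating $C_{i}(\lambda_{\epsilon})$ to $C_{i}(\lambda_{\epsilon,1})+C_{i}(\lambda_{\epsilon,2})$ and the Morse inequalities derived from it would have to be corrected to account for these four critical points.
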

\begin{proof}
The spaces $\EE_{\epsilon}$ is obtained joining two disjoint copies
of $S^{2}-D_{1}\cup D_{2}\cup D_{3}\cup D_{4}$ along cylinders of
the form $S^{1}\times(-\delta,\delta)$, where the $D_{i}$\textasciiacute{}s
are small open disks centred at the four points in $\mult_{\epsilon}$
(see Figure \ref{fig:Space of Eigenlines}). From this it follows
that $\EE_{\epsilon}$ is a surface of genus $3$. To check that $\lambda_{\epsilon}$
has no singularities in $\pi_{\epsilon}^{-1}(\mult_{\sigma})$ we
write a explicit formula for $\lambda_{\epsilon}$ and check that
this is actually the case for $0<\epsilon_{1}<\epsilon_{2}<\epsilon_{3}$.
Consider polar coordinates on $S^{2}-\{(0,0,1),(0,0,-1)$ given by
\[
p(\theta,\varphi)=(\cos(\theta)\cos(\varphi),\sin(\theta)\cos(\varphi),\sin(\varphi))
\]
and the orthonormal frame $\{u(\theta,\varphi),v(\theta,\varphi)\}$
in $\tangent(S^{2}-\{(0,0,1),(0,0,-1))$ given by
\begin{eqnarray*}
u(\theta,\varphi) & = & (-\sin(\theta),\cos(\theta),0)\\
v(\theta,\varphi) & = & (-\cos(\text{\ensuremath{\theta})\ensuremath{\sin}(\ensuremath{\varphi}),-\ensuremath{\sin}(\ensuremath{\theta})\ensuremath{\sin}(\ensuremath{\theta}),\ensuremath{\cos}(\ensuremath{\varphi})). }
\end{eqnarray*}
If we let $\alpha$ be the angle of a line $l\in\proj{(\tangent S^{2})}$
with respect to $u(\theta,\varphi)$ then we can use $(\theta,\varphi,\alpha)$
as coordinates for $\proj{(\tangent S^{2})}$. In these coordinates
we have that (see \cite[proof of Theorem 5 ]{kn:mocs}
\begin{eqnarray*}
\EE_{\epsilon} & = & f_{\epsilon}^{-1}(0)\\
\lambda_{\epsilon} & = & g_{\epsilon}|\EE_{\epsilon}
\end{eqnarray*}
where 
\begin{eqnarray*}
f_{\epsilon}(\theta,\varphi,\alpha) & = & (\epsilon_{2}-\epsilon_{1})\cos(2\alpha)\cos(\text{\ensuremath{\theta})\ensuremath{\sin}(\ensuremath{\theta})\ensuremath{\sin}(}\varphi)\\
 & + & \frac{1}{2}\sin(2\alpha)(\cos^{2}(\theta)(\epsilon_{2}-\epsilon_{1}\sin^{2}(\varphi))\\
 & + & \frac{1}{2}\sin(2\alpha)(\sin^{2}(\theta)(\epsilon_{1}-\epsilon_{2}\sin^{2}(\varphi))-\epsilon_{3}\cos^{2}(\varphi))\\
g_{\epsilon}(\theta,\varphi,\alpha) & = & -\sin(2\alpha)\sin(\theta)\sin(\varphi)\\
 & + & \frac{1}{2}(\epsilon_{1}\sin^{2}(\theta)+\epsilon_{2}\cos^{2}(\theta)+\epsilon_{3}\cos^{2}(\varphi)))\\
 & + & \frac{1}{2}(\epsilon_{1}\cos^{2}(\theta)+\epsilon_{2}\sin^{2}(\theta))\sin^{2}(\varphi)\\
 & + & \frac{1}{2}\cos(2\alpha)(\cos^{2}(\theta)(\epsilon_{2}-\epsilon_{1}\sin^{2}(\varphi))\\
 & + & \frac{2}{3}\cos(2\alpha)(\sin^{2}(\theta)(\epsilon_{1}-\epsilon_{2}\sin^{2}(\varphi)-\epsilon_{\text{3}}\cos^{2}(\varphi))
\end{eqnarray*}
From this, we obtain that
\begin{eqnarray*}
\nabla f_{\epsilon}(0,\varphi_{m},\alpha) & = & \left(\frac{(\epsilon_{1}-\epsilon_{2})(\epsilon_{3}-\epsilon_{2})^{1/2}}{(\epsilon_{3}-\epsilon_{1})^{1/2}}\cos(2\alpha),(\epsilon_{2}-\epsilon_{1})^{1/2}(\epsilon_{3}-\epsilon_{2})^{1/2}\sin(2\alpha),0\right)\\
\nabla g_{\epsilon}(0,\varphi_{m},\alpha) & = & \left(\frac{(\epsilon_{3}-\epsilon_{2})^{1/2}}{(\epsilon_{3}-\epsilon_{1})^{1/2}}\sin(2\alpha),(\epsilon_{2}-\epsilon_{1})^{1/2}(\epsilon_{3}-\epsilon_{2})^{1/2}(1-\cos(2\alpha),0\right).
\end{eqnarray*}
where $(0,\varphi_{m})$ corresponds to one of the zeros of $(s_{\epsilon})_{0}$
as in the proof of Proposition \ref{prop:s_epsilon_0_transversal}
(by symmetry the other zeros of $(s_{\epsilon})_{0}$ are dealt in
the same way). The critical points of $\lambda_{\epsilon}$ correspond
to the $\alpha$\textasciiacute{}s at which $\nabla f_{\epsilon}(0,\varphi_{m},\alpha)=\nabla g_{\epsilon}(0,\varphi_{m},\alpha)$.
By using the above formulas we have that this equality can not hold
if all the $\epsilon_{i}$\textasciiacute{}s are different from each
other. \end{proof}
\begin{cor}
If $\epsilon\in\sym{\RR^{3}}$ has eigenvalues $0<\epsilon_{1}<\epsilon_{2}<\epsilon_{3}$
then we have that $C_{i}(\lambda_{\text{\ensuremath{\epsilon}}})=C_{i}(\lambda_{\epsilon,1})+C_{i}(\lambda_{\epsilon,2})$
for $i=0,1,2$.\end{cor}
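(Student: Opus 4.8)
The plan is to derive this from the preceding proposition together with the structural description of $\pi_\epsilon\colon\EE_\epsilon\to S^2$ recalled just above it. The preceding proposition tells us that, for pairwise distinct $\epsilon_i$, the smooth function $\lambda_\epsilon$ has \emph{no} critical points on the four exceptional circles making up $\pi_\epsilon^{-1}(\mult_\epsilon)$; hence every critical point of $\lambda_\epsilon$ lies in the open set $\pi_\epsilon^{-1}(S^2\setminus\mult_\epsilon)=X_1\sqcup X_2$. Since this set is open in $\EE_\epsilon$ (because $\mult_\epsilon$ is finite, hence closed), a point of $X_i$ is a critical point of $\lambda_\epsilon$ exactly when it is a critical point of the restriction $\lambda_\epsilon|_{X_i}$, with the same Hessian, hence the same non-degeneracy and index. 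Therefore $C_i(\lambda_\epsilon)=C_i(\lambda_\epsilon|_{X_1})+C_i(\lambda_\epsilon|_{X_2})$ for every $i$.

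Next I would push each term down to $S^2$ using that $\pi_\epsilon\colon X_i\to S^2\setminus\mult_\epsilon$ is a diffeomorphism with $\lambda_{\epsilon,i}\circ\pi_\epsilon=\lambda_\epsilon|_{X_i}$. Because $\lambda_\epsilon|_{X_i}$ is the pullback of $\lambda_{\epsilon,i}$ by a diffeomorphism, the map $q\mapsto\pi_\epsilon(q)$ is a bijection from the critical set of $\lambda_\epsilon|_{X_i}$ onto the critical set of $\lambda_{\epsilon,i}$ on $S^2\setminus\mult_\epsilon$, and it intertwines the Hessians via the linear isomorphism $d(\pi_\epsilon)_q$, so it preserves non-degeneracy and Morse index. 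Consequently $C_i(\lambda_\epsilon|_{X_i})=C_i(\lambda_{\epsilon,i})$, and adding the two cases $i=1,2$ gives the asserted identity.

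The one point that needs to be pinned down is the meaning of $C_i(\lambda_{\epsilon,j})$: by Proposition \ref{prop:MasZeros} and Proposition \ref{prop:s_epsilon_0_transversal} the functions $\lambda_{\epsilon,1},\lambda_{\epsilon,2}$ have conical singularities precisely on $\mult_\epsilon$, so $C_i(\lambda_{\epsilon,j})$ can only count non-degenerate critical points in the smooth locus $S^2\setminus\mult_\epsilon$; with that convention the conical points contribute nothing to either side and the argument closes. The main obstacle is really already disposed of by the preceding proposition --- it is exactly the fact that no critical point of $\lambda_\epsilon$ escapes onto the exceptional circles --- so once that is invoked, the corollary becomes a bookkeeping statement about the partition $\EE_\epsilon=X_1\sqcup X_2\sqcup\pi_\epsilon^{-1}(\mult_\epsilon)$.
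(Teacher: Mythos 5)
Your proposal is correct and follows the same route as the paper: invoke the preceding proposition to exclude critical points from $\pi_{\epsilon}^{-1}(\mult_{\epsilon})$, then identify $\lambda_{\epsilon}$ on each component of the complement with $\lambda_{\epsilon,1}$ or $\lambda_{\epsilon,2}$ via the diffeomorphism $\pi_{\epsilon}$. The paper's proof is just a terser version of yours; your added remarks about Hessians and the convention for counting critical points of the singular functions are sensible elaborations of the same argument.
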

\begin{proof}
By the previous proposition we have that $\lambda_{\epsilon}$ has
no critical points on $\pi_{\epsilon}^{-1}(\mult_{\epsilon})$. But
we know that on each of the two connected components of $\EE_{\epsilon}-\pi_{\epsilon}^{-1}(\mult_{\epsilon})$
the function $\lambda_{\epsilon}$ coincides (up to diffeomorphism)
with $\lambda_{\epsilon,1}$or $\lambda_{\epsilon,2}$.
\end{proof}
Applying Morse inequalities to $\lambda_{\epsilon}$ and using the
above results we obtain that
\begin{eqnarray*}
C_{0}(\lambda_{\epsilon,1})+C_{0}(\lambda_{\epsilon,2}) & \geq & 1\\
C_{1}(\lambda_{\epsilon,1})+C_{1}(\lambda_{\epsilon,2}) & \geq & 3\\
C_{2}(\lambda_{\epsilon,1})+C_{2}(\lambda_{\epsilon,2}) & \geq & 1\\
\sum_{i=0}^{2}C_{0}(\lambda_{\epsilon,1})+C_{0}(\lambda_{\epsilon,2}) & = & -4
\end{eqnarray*}
The above holds under the assumption that $\lambda_{\epsilon}$ is
a Morse function, but it can easily be seen that is the case if the
eigenvalues of $\epsilon$ are all different from each other.

\section{Generalisations - Future Work}

In this section we propose a program to generalise our results to
general hyperbolic differential operators on bundles. 

Let $E$ and $F$ be vector bundles of rank $m$ and $k$ over a Riemannian
manifold $X$ of dimension $n$. Consider a linear differential operator
of degree $d\in\ZZ^{+}$ of the form
\begin{equation}
L=\sum_{i=0}^{d}\left(L_{i}\circ\frac{\partial^{d-i}}{\partial t^{d-i}}\right),\label{eq:MainDiffOperator}
\end{equation}
where $L_{i}:C^{\infty}(E)\rightarrow C^{\infty}(F)$ is a differential
operator of order $i$. Observe that $L$ acts on time dependent sections
of $E$. We are interested in \emph{high frequency solutions} of the
equation $Lu=0$, i.e in solutions to the asymptotic partial differential
equation 
\[
L\left(u_{0}e^{is\varphi}\right)=0\hbox{\, as\,\,}s\mapsto\infty,
\]
where $u_{0}$ is a section of $E$ and $\varphi$ is a smooth real
valued function in $X\times\RR$ (known in the physics literature
as the \emph{phase function}). The above asymptotic equation leads
to the the equation (see \cite[pg. 31]{kn:guillemin})
\begin{equation}
\left(\sum_{i=0}^{d}\sigma_{i}(d_{x}\varphi(x,t))\left(\frac{\partial\varphi}{\partial t}(x,t)\right)^{d-i}\right)u_{0}(x)=0,\label{eq:PolarisationEquation}
\end{equation}
where $\sigma_{i}:\cotangent X\rightarrow\hbox{Hom}(E,F)$ is the
\emph{principal symbol} of $L_{i}$. If we choose local coordinates
for $X$ and local trivialisation of $E$ and $F$ over an open set
$U\subset X,$ we can write
\[
L_{i}=\sum_{|\alpha|\leq i}A_{i,\alpha}\frac{\partial^{\alpha}}{\partial x^{\alpha}}
\]
where
\[
\alpha=(\alpha_{1},\ldots,\alpha_{n}),|\alpha|=\alpha_{1}+\cdots+\alpha_{n}\hbox{\, and\,\,\,}\frac{\partial^{\alpha}}{\partial x^{\alpha}}=\frac{\partial^{\alpha_{1}+\cdots+\alpha_{n}}}{(\partial x_{1})^{\alpha_{1}}\cdots(\partial x_{n})^{\alpha_{n}}}
\]
and for every $\alpha$ and $i$ we have that $A_{i,\alpha}$ is a
function from $U$ to the $k\times m$ matrices with coefficients
in $\RR$. The local expression for $\sigma_{i}$ is then given by
\[
\sigma_{i}(x,\xi)=\sum_{|\alpha|=i}A_{i,\alpha}(x)\xi^{\alpha}\hbox{\,\, where\,\,\,}\xi=\xi_{1}^{\alpha_{1}}\xi_{2}^{\alpha_{2}}\cdots\xi_{n}^{\alpha_{n}}.
\]
We see that over the fibre variables of $\cotangent X$ the symbol
$\sigma_{i}$ is an homogenous polynomial of degree $i$ whose coefficients
are real $k\times m$ matrices. 

We define the\emph{ Fresnel hypersurface }associated to $L$ by 
\begin{equation}
\FH_{L}=\left\{ (x,-\tau\xi)\in\cotangent X\left|(x,\xi)\in S(\cotangent X)\hbox{\, and\,}\ker\left(\sum_{i=0}^{d}\sigma_{i}(x,\xi)\tau^{d-i}\right)\not=0\right.\right\} ,\label{eq:characteristicSet}
\end{equation}
We will assume that the following conditions hold
\begin{enumerate}
\item We have that $F$ is a Riemannian real vector bundle and $E=F$ .
For example, if we were modelling small oscillations of an elastic
surface $X$ then we could assume that $E=F=\tangent X$.
\item For all $(x,\xi)\in\cotangent X$ and $1\leq i\leq d$ we have that
$\sigma_{i}(x,\xi)$ is a symmetric operator with respect to a metric
in $F$, i.e
\[
<\sigma_{i}(x,\xi)v,w>_{F}=<v,\sigma_{i}(x,\xi)w>_{F}\hbox{\,\, for\, all\,}v,w\in F_{x}.
\]

\end{enumerate}
It turns out that two conditions above hold in many cases of physical
interest, and in particular those physical problems that arise from
variational principles. Under the above assumptions we can write
\[
\FH_{L}=\{(x,-\tau\xi)\in S(\cotangent X)|p_{L}(x,\xi,\tau)=0\},
\]
where $S(\cotangent X)$ are the unit vectors in $\cotangent X$ and
\[
p_{L}(x,\xi,\tau)=\det\left(\sum_{i=0}^{d}\sigma_{i}(x,\xi)\tau^{d-i}\right).
\]

\begin{defn}
A polynomial $p=p(\tau)$ is said to be \emph{hyperbolic} if it has
as many real roots as its degree (counting multiplicities) and it
is said to be \emph{strictly hyperbolic} if all these roots are different
from each other. 
\end{defn}
For a fixed $(x,\xi)\in S(\cotangent X)$ let
\[
p_{L,x,\text{\ensuremath{\xi}}}(\tau)=p_{L}(x,\xi,\tau).
\]

\begin{defn}
The operator $L$ is said to be hyperbolic if $p_{L,x,\xi}$ is hyperbolic
for all $(x,\xi)\not=0$ in $\cotangent X$.
\end{defn}
If $L$ is hyperbolic we have functions $\lambda_{i}:\cotangent X\rightarrow\RR$
for $1\leq i\leq\dim(F)\cdot\hbox{deg}(L)$ given as as the roots
of $p_{L,x,\xi}$, and we can write
\[
\FH_{L}=\bigcup_{i=1}^{d}\{-\lambda_{i}(\xi)\xi|\xi\in S(\cotangent X)\}.
\]
We define $\mult_{L}$ as the set of points in $S(\cotangent X)$
at which two of more or the $\lambda_{i}$\textasciiacute{}s coincide.
The singularities of $\FH_{L}$ occur at points on the set $\mult_{L}$.
Inspired by our work on Maxwell\textasciiacute{}s equations, we pose
the following problems.
\begin{enumerate}
\item \emph{Find topological obstructions to the condition }$\mult_{L}=\emptyset$.
This will provide obstructions to the existence of strictly hyperbolic
differential operators on $F$. It is expected that the characteristic
classes of $F$ should be involved in the answer to this problem.
\item \emph{Desingularisation of} $\FH_{L}$. As in the case of Maxwell\textasciiacute{}s
equation, we would like to find a smooth space $\EE_{L}$ and a smooth
function $\lambda_{L}:\EE_{L}\rightarrow\RR$ that contains all the
information of the $\lambda_{L,i}$\textasciiacute{}s. We would then
expect to be able to apply Morse theory to $\lambda_{L}$ to obtain
similar results as in the case of Maxwell\textasciiacute{}s equations.
\end{enumerate}
In relation with Problem 1 above we mention the work in the articles
\cite{kn:braam,kn:hormander,kn:john,kn:khesin,kn:lax,kn:nuij}, which
contain some local and global results related to the multiplicity
of eigenvalues of symbols of differential and pseudo-differential
operators. Regarding Problem 2, we mention that the non-smoothness
of $\lambda_{L,i}$\textasciiacute{}s at $\mult_{L}$ explain phenomena
like wave transformation (see \cite[pg. 223]{kn:arnold}) in which
longitudinal waves transforms into transversal ones (or viceversa).

\bibliographystyle{plain}
\bibliography{myBib}

\end{document}